\documentclass[11pt]{amsart}

\usepackage{amsmath}
\usepackage{amssymb}
\usepackage[utf8]{inputenc}

\usepackage{float}
\usepackage{graphicx}
\usepackage{url}

\usepackage{color, colortbl}


\newtheorem{theo}{Theorem}[section]
\newtheorem{prop}{Proposition}[section]

\newtheorem{defi}{Definition}[section]

\newcommand{\proscal}[2]{\left\langle#1,#2\right\rangle}

\newcommand{\V}[1]{\mathrm{Var}\left[#1\right]}

\newcommand{\normal}[2]{\mathcal{N}\left(#1,\, #2\right)}

\newcommand{\R}{\mathbb{R}}
\newcommand{\Rp}{{\mathbb{R}^p}}
\newcommand{\Rd}{{\mathbb{R}^d}}

\newcommand{\Mg}{\mathcal{M}_g}

\newcommand{\colbf}[1]{{\mathbf{#1}}}
\newcommand{\vecbf}[1]{{\vec{\mathbf{#1}}}}
\newcommand{\tabbf}[1]{{\mathbf{#1}}}

\newcommand{\Id}{\mathrm{Id}}

\parskip 2ex

\begin{document}

\title{Probabilistic Auto-Associative Models and Semi-Linear PCA}

\author{Serge Iovleff}
\date{\today}

\begin{abstract}
Auto-Associative models cover a large class of methods used in data analysis.
In this paper, we describe the generals properties of these models when the projection
component is linear and we propose and test an easy to implement Probabilistic Semi-Linear
Auto-Associative model in a Gaussian setting. We show it is a generalization of the PCA
model to the semi-linear case. Numerical experiments on simulated datasets and a real
astronomical application highlight the interest of this approach.
\end{abstract}

\maketitle

\section{Introduction}

Principal component analysis (PCA) \cite{Pearson, Hotel, Jolli} is a well
established tool for dimension reduction in multivariate
data analysis. It benefits from a simple geometrical interpretation.
Given a set of $n$ points $\tabbf{Y} = (\colbf{y}_1,\ldots,\colbf{y}_n)'$
with $\colbf{y}_i \in \Rp$ and an integer $0\leq d\leq p$,
PCA builds the $d$-dimensional affine subspace minimizing the
Euclidean distance to the scatter-plot \cite{Pearson}.
The application of principal component analysis postulates
implicitly some form of linearity. More precisely, one assumes
that the data cloud is directed, and that the data points can
be well approximated by there projections to the affine hyperplane
corresponding to the first $d$ principal components.

Starting from this point of view, many authors have proposed nonlinear extensions of this technique.
Principal curves or principal surfaces methods \cite{HAS89,Hastie2001,DEL01} belong to this family of
approaches, non-linear transformation of the original data set \cite{DUR93, BES95} too. The
auto-associative neural networks can also be view as a non-linear PCA model \cite{Baldi, Bei,
BISHOP95, Hinton97}. In \cite{GI} we propose the auto-associative models (AAM)
as candidates to the generalization of PCA using a projection pursuit
regression algorithm \cite{Fried81,Klinke} adapted to the auto-associative case.
A common point of these approaches is that they have the intent to estimate an
auto-associative model whose definition is given hereafter.
\begin{defi}\label{def:AAM}
A function $g$ is an auto-associative function
of dimension $d$ if it is a map from $\Rp$ to $\Rp$ that can be written
$g=R\circ P$ where $P$ (the ``Projection'') is a map from
$\Rp$ to $\Rd$ (generally $d<p$) and $R$ (the ``Restoration'' or the "Regression") is a map
from $\Rd$ to $\Rp$.

An auto-associative model (AAM) of dimension $d$ is a manifold $\mathcal{M}_g$ of the form
$$
\mathcal{M}_g = \{ \colbf{y}\in \R^p,\, \colbf{y}-g(\colbf{y}) = 0 \}
$$
where $g$ is an auto-associative function of dimension $d$.
\end{defi}

For example the PCA constructs an auto-associative model using
as auto-associative function an orthogonal projector on an affine subspace
of dimension $d$. More precisely we have
$$ g(\colbf{y}) = \colbf{m}+\sum_{i=1}^d \proscal{\vecbf{a}^i}{\colbf{y}-\colbf{m}}
   \vecbf{a}^i, \quad \colbf{y}\in \Rp
$$
with $\colbf{y}, \colbf{m}, \vecbf{a}_i \in \Rp$ and the vectors $\vecbf{a}_i$
chosen in order to maximize the projected variance. $g$ can be written $g=R\circ P$ with
$$P(\colbf{y}) = \left(\proscal{\vecbf{a}^1}{\colbf{y}-\colbf{m}}, \ldots,
                       \proscal{\vecbf{a}^d}{\colbf{y}-\colbf{m}} \right)$$
and
$$R(\colbf{x}) = \colbf{m} + x_1 \vecbf{a}^1 + \ldots + x_d \vecbf{a}^d$$
with $\colbf{x}=(x_1,\ldots,x_d)^\prime$. The AAM is then the affine subspace
given by the following equation
$$
\Mg = \left\{ \colbf{y} \in \Rp;\, \colbf{y} - \colbf{m}
              - \sum_{i=1}^d \proscal{\vecbf{a}^i}{\colbf{y}-\colbf{m}}\vecbf{a}^i = 0
\right\}
$$
Interested reader can check that principal curves, principal surfaces, auto-associative
neural networks, kernel PCA \cite{Schoelkopf99},
ISOMAP \cite{Tenenbaum2000} and local linear embedding \cite{Roweis2000}
have also the intent to estimate an AAM.

In the PCA approach the projection and the restoration function are both linear. It is thus natural
to say that the PCA is a \emph{Linear Auto-Associative Model}.
In the general case, the manifold $\mathcal{M}_g$ set can be empty (i.e. the auto-associative function $g$
have no fixed point) or very complicated to describe. Our aim in this paper is to study from a theoretical
and practical point of view the properties of some Auto-Associative models in an intermediary situation
between the PCA model and the general case:
we will assume that the projection function is linear and let the regression function be arbitrary.
We call the resulting AAM the \emph{Semi-Linear Auto-Associative Models} (SLAAM).

Having restricted our study to the SLAAM, we have to give us some criteria to maximize. As we said previously,
the PCA tries to maximize the projected variance or, equivalently, to minimize the residual variance. Common AAM
approaches used also the squared reconstruction error as criteria, or more recently a penalized criteria
\cite{Hastie2001}. However as pointed out by M. E. Tipping and C. M. Bishop \cite{TB99}, one limiting
disadvantage of this approach is the absence of a probability density model and associated likelihood measure.
The presence of a probabilistic model is desirable as
\begin{itemize}
  \item the definition of a likelihood measure permits comparison between concurrent models
  and facilitates statistical testing,
  \item A single AAM may be extended to a mixture of such models,
  \item if a probabilistic AAM is used to model the class conditional densities in a classification
  problem, the posterior probabilities of class membership may be computed.
\end{itemize}
We propose thus a Gaussian generative model for the SLAAM and try to estimate it using a maximum likelihood
approach. In the general case we are faced with a difficult optimization problem and we cannot go further
without additional assumptions. It will appear clearly that if $\tabbf{P}$ is \emph{known} then the
estimation problem of a SLAAM is very close to an estimation problem in a regression context. There is however
some differences we will enlighten. In particular it will appear that in order to get tractable
maximum-likelihood estimates, we have to impose some restrictions to the noise.
We call the resulting model of all these assumptions/simplifications a Semi-Linear Principal Component Analysis.
It does not seem possible to add non-linearity to the PCA and get tractable likelihood estimate for $\tabbf{P}$.
But clearly, the assumption that $\tabbf{P}$ is known is too strong in practice. We propose thus to estimate
it in a separate step using either the PCA or a contiguity analysis \cite{Lebart00} by extending our
previous work on the Auto-Associative models \cite{GI}. Finally, even if $\tabbf{P}$ is assumed known it
remains to estimate the regression function $R$ which is a non-linear function from $\Rd$ to $\Rp$.
If $d>1$ and $p$ is moderately high the task become very complicated. Thus we simplify once more the
model and assume that $R$ is additive inspired by the Generalized Additive Model (GAM) approach \cite{Hastie90}.

In view of the experiments we have performed and we present there, it seems we obtain a practical
and simple model which generalizes in an understandable way the PCA model to the non-linear case.

The paper is organized as follows. Section \ref{sec:PAAM} introduces the Probabilistic
Semi-Linear Auto-Associative Models (PSLAAM) and relate them to the PCA and Probabilistic PCA models.
In  section \ref{sec:NLPCA} we present the Probabilistic Semi-Linear PCA models and the estimation of
theirs parameters conditionally to the knowledge of the projection matrix $\tabbf{P}$. Section
\ref{sec:contiguityAnalysis} is devoted to the determination of the projection matrix $\tabbf{P}$ using
contiguity analysis. Data sets and experiments are detailed in Section \ref{sec::apps} with a real
astronomical data set. Finally, some concluding remarks are proposed in Section \ref{sec::conc}.

\section{Semi-Linear Auto-Associatif Models (SLAAM)}
\label{sec:PAAM}

\subsection{Geometrical properties of the SLAAM}
\label{subsec:GeoSLAAM}

Let us first consider a general auto-associative model as given in the definition \ref{def:AAM}.
We have the following evident property
\begin{prop}\label{prop:id}
Let $H= \{P(\colbf{y});\, \colbf{y}\in\Mg \}\subset\Rd$. On $H$ the projection function and the
regression function verify
\begin{equation}\label{eq:id}
 P\circ R = \Id_d
\end{equation}
where $\Id_d$ denote the identity function of $\Rd$.
\end{prop}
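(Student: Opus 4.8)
The plan is to unfold the definitions directly: the whole content of the proposition is that a point of $\Mg$ is a fixed point of $g=R\circ P$, and $H$ is by construction the image under $P$ of exactly these fixed points. So I would argue pointwise on $H$.

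First I would fix an arbitrary $\colbf{x}\in H$. By the definition $H=\{P(\colbf{y});\,\colbf{y}\in\Mg\}$, there exists a witness $\colbf{y}\in\Mg$ with $\colbf{x}=P(\colbf{y})$. Next I would use membership in $\Mg$: by Definition \ref{def:AAM}, $\colbf{y}\in\Mg$ means $\colbf{y}-g(\colbf{y})=0$, i.e. $\colbf{y}$ is a fixed point of $g=R\circ P$. Spelling this out gives $\colbf{y}=g(\colbf{y})=R(P(\colbf{y}))=R(\colbf{x})$. I would then apply $P$ to both ends of this chain and substitute $P(\colbf{y})=\colbf{x}$, obtaining $(P\circ R)(\colbf{x})=P(R(\colbf{x}))=P(\colbf{y})=\colbf{x}$. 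Since $\colbf{x}\in H$ was arbitrary, this establishes \eqref{eq:id}.

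There is no genuine obstacle here — the argument is a one-line composition once the fixed-point characterization of $\Mg$ is invoked — so the only thing worth being careful about is the quantifier on the domain. I would emphasize that the restriction to $H$ is essential and that one must not overclaim: nothing forces $P\circ R=\Id_d$ on all of $\Rd$, since the recovery $P(R(\colbf{x}))=\colbf{x}$ relies on $R(\colbf{x})$ lying in $\Mg$, which is guaranteed only when $\colbf{x}$ is itself the projection of some fixed point of $g$. Thus the identity \eqref{eq:id} is exactly a statement about the subset $H$ on which $R$ is a genuine section of $P$.
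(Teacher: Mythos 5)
Your proof is correct and follows essentially the same route as the paper's: both unfold the fixed-point characterization $\colbf{y}=g(\colbf{y})=R(P(\colbf{y}))$ for $\colbf{y}\in\Mg$ and apply $P$ to recover $P(R(\colbf{x}))=\colbf{x}$ for $\colbf{x}=P(\colbf{y})\in H$. Your added remark that the identity cannot be expected to hold beyond $H$ is a sensible clarification but does not change the argument.
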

\begin{proof}
Let $\colbf{y}\in \Mg$ and let $\colbf{x} = P(\colbf{y})$, then
$$ \colbf{x} = P(\colbf{y}) = P(g(\colbf{y})) = P(R(P(\colbf{y}))) = P(R(\colbf{x})).$$
\end{proof}

As a consequence, we have the following ``orthogonality'' property verified by an AAM when $P$
is an additive function
\begin{prop}
Let $V= \{P(\colbf{y});\, \colbf{y}\in\Rp \}$ and assume that the property (\ref{eq:id}) extend on
$V$,
let $\colbf{y}\in \Rp$, $\bar{\colbf{y}} = R(P(\colbf{y}))$ and $\bar{\colbf{\varepsilon}} =
\colbf{y} - \bar{\colbf{y}}$.
If $P$ is additive, i.e. $P(\colbf{y} + \colbf{y}') = P(\colbf{y}) + P(\colbf{y}')$, then
$$ P(\bar{\colbf{\varepsilon}}) = 0.$$
\end{prop}
\begin{proof}
Using the property (\ref{eq:id}), we have on one hand $P(\bar{\colbf{y}}) = P(R(P(\colbf{y}))) =
P(\colbf{y})$.
While on the other hand
$P(\bar{\colbf{y}}) = P(\colbf{y} -\bar{\colbf{\varepsilon}}) =  P(\colbf{y}) -
P(\bar{\colbf{\varepsilon}})$
giving the announced result.
\end{proof}
Clearly we have $H\subset V$ and the assumption given in this proposition seems quite natural.
We focus now on the semi-linear case and we assume that
\begin{equation}\label{eq:linear}
P(\colbf{y}) = \left( \proscal{\vecbf{a}^1}{\colbf{y}}, \ldots, \proscal{\vecbf{a}^d}{\colbf{y}}
\right) = \tabbf{P} \colbf{y}.
\end{equation}
with $\tabbf{P} = (\vecbf{a}^{1},\ldots, \vecbf{a}^{d})^\prime$ a matrix of size $(d,p)$.

\begin{prop}\label{prop:SLAAM}
Let $g = R\circ P$ be an auto-associative function, with $P$ given in (\ref{eq:linear}) and $R$
verifying the property (\ref{eq:id}). Let $\mathcal{B} = \left( \vecbf{a}^{1},\ldots,
\vecbf{a}^{d}, \vecbf{a}^{d+1},\ldots,\vecbf{a}^{p} \right)$ be an orthonormal basis of $\Rp$
with $(\vecbf{a}^{d+1},\ldots,\vecbf{a}^{p})$ chosen arbitrarily. Let $\colbf{y} \in \mathcal{M}_g$,
and let $\tilde{\colbf{y}}$ and $\tilde{r}$ denote respectively the vector $\colbf{y}$ and the
auto-associative function $r$ in the basis $\mathcal{B}$, then
\begin{equation}\label{eq:can}
\begin{pmatrix}
\tilde{y}_1 \\ \vdots \\ \tilde{y}_d\\ \tilde{y}_{d+1} \\ \vdots \\ \tilde{y}_p
\end{pmatrix}
=
\begin{pmatrix}
\tilde{y}_1 \\ \vdots \\ \tilde{y}_d \\ \tilde{r}_{d+1}(\tilde{y}_1, \ldots, \tilde{y}_d) \\ \vdots
\\
 \tilde{r}_p(\tilde{y}_1, \ldots, \tilde{y}_d)
\end{pmatrix}.
\end{equation}
\end{prop}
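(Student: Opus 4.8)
The plan is to rewrite everything in the coordinate system attached to the orthonormal basis $\mathcal{B}$ and to exploit the fact that, since the first $d$ vectors of $\mathcal{B}$ are exactly the rows $\vecbf{a}^1,\ldots,\vecbf{a}^d$ of $\tabbf{P}$, the linear projection $P$ degenerates into the trivial map that reads off the first $d$ coordinates. First I would record that orthonormality of $\mathcal{B}$ gives, for every index $i$, the coordinate formula $\tilde{y}_i = \proscal{\vecbf{a}^i}{\colbf{y}}$. Comparing with (\ref{eq:linear}), the first $d$ of these coordinates are precisely the components of $P(\colbf{y})$, so that in the basis $\mathcal{B}$ one has $P(\colbf{y}) = (\tilde{y}_1,\ldots,\tilde{y}_d)$. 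This identification is the real engine of the proof.

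Next I would express the regression function in $\mathcal{B}$ by setting $\tilde{r}_i(\colbf{x}) = \proscal{\vecbf{a}^i}{R(\colbf{x})}$ for $\colbf{x}\in\Rd$, so that the $i$-th coordinate of $g(\colbf{y}) = R(P(\colbf{y}))$ in $\mathcal{B}$ equals $\tilde{r}_i(\tilde{y}_1,\ldots,\tilde{y}_d)$. Since $\colbf{y}\in\Mg$ means $\colbf{y} = g(\colbf{y})$, identifying the coordinates of both sides in $\mathcal{B}$ yields $\tilde{y}_i = \tilde{r}_i(\tilde{y}_1,\ldots,\tilde{y}_d)$ for all $i=1,\ldots,p$. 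For $i>d$ this is exactly the lower block of (\ref{eq:can}), so the bulk of the statement is already obtained.

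It remains to verify that the first $d$ rows of (\ref{eq:can}) form a genuine identity rather than an extra constraint. For $i\leq d$ the definition gives $\tilde{r}_i(\colbf{x}) = (P(R(\colbf{x})))_i$, and since $\colbf{y}\in\Mg$ forces $P(\colbf{y})\in H$, Proposition \ref{prop:id} applies and yields $P(R(P(\colbf{y}))) = P(\colbf{y})$; hence $\tilde{r}_i(\tilde{y}_1,\ldots,\tilde{y}_d) = \tilde{y}_i$ for $i\leq d$, so the top block collapses to $\tilde{y}_i=\tilde{y}_i$ as claimed. The only delicate point is to invoke the identity (\ref{eq:id}) on the set $H$ where $P(\colbf{y})$ actually lives, rather than to assume $P\circ R = \Id_d$ on all of $\Rd$; once the coordinate change is set up, the remainder is routine bookkeeping.
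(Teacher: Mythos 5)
Your proof is correct and takes essentially the same approach as the paper's: the paper's one-sentence proof likewise rewrites $\colbf{y}$ and the regression function in the basis $\mathcal{B}$ (via left multiplication by the change-of-basis matrix $\tabbf{Q}$) and invokes (\ref{eq:id}) to handle the first $d$ coordinates, which is exactly your coordinate bookkeeping spelled out in detail. The only differences are presentational: you make explicit the identification $P(\colbf{y})=(\tilde{y}_1,\ldots,\tilde{y}_d)$ and you are careful to apply (\ref{eq:id}) only on the set $H$ where $P(\colbf{y})$ lives, a refinement the paper leaves implicit.
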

\begin{proof}
It suffices to notice that the change of basis matrix $\tabbf{Q}$ is given by
$$
\tabbf{Q}^\prime=
\left(
\begin{array}{cccccc}
\vecbf{a}^{1}, & \ldots, & \vecbf{a}^{d}, & \vecbf{a}^{(d+1)}, & \ldots, & \vecbf{a}^{p}
\end{array}
\right),
$$
thus the left multiplication of $\colbf{y}$ and $r$ by $\tabbf{Q}$, using (\ref{eq:id}), will
give (\ref{eq:can}).
\end{proof}

From this last proposition we can see that the Semi-Linear Auto-Associative models have a relatively
simple geometrical structure and that we cannot expect to model highly non-linear models with them.

\subsection{Probabilistic Semi-Linear Auto-Associative Models (PSLAAM)}
\label{subsec:PSLAAM}

In the sequel, we will denote by $V$ the subspace spanned by the set of vectors
$(\vecbf{a}^1, \ldots, \vecbf{a}^d)$, and give us an arbitrary orthonormal
basis of $V^\bot$ denoted by $(\vecbf{a}^{d+1}, \ldots, \vecbf{a}^p)$. We will denote by
$\tabbf{P}$ the matrix $(\vecbf{a}^{1},\ldots,\vecbf{a}^{d})^\prime$ 
and by $\bar{\tabbf{P}}$ the matrix $(\vecbf{a}^{d+1},\ldots,\vecbf{a}^{p})^\prime$.
As in proposition \ref{prop:SLAAM}, $\tabbf{Q}$ represents the unitary matrix
$(\tabbf{P} | \bar{\tabbf{P}})^\prime = (\vecbf{a}^{1},\ldots,\vecbf{a}^{p})^\prime$.

\subsubsection{General Gaussian Setting}

\begin{defi}\label{def:GSLAAM}
Let $\colbf{x}$ be a $d$-dimensional Gaussian random vector:
\begin{equation}\label{eq:lawx}
\colbf{x} \sim \mathcal{N}(\mu_x, \Sigma_x)
\end{equation}
and let $\tilde{\varepsilon}$ be a $p$-dimensional centered Gaussian random vector with
a diagonal covariance matrix $\Sigma_{\tilde{\varepsilon}} = \mathrm{Diag}(\sigma_1,\ldots,\sigma_p)$.

The $p$-dimensional vector $\colbf{y}$ is a Probabilistic Semi-Linear Auto-Associative Model (PSLAAM)
if it can be written
\begin{equation} \label{eq:model}
\colbf{y} = \tabbf{Q}^\prime
\left(
\begin{pmatrix}
x_1 \\ \vdots \\
x_d \\
\tilde{r}_{d+1}(\colbf{x}) \\ \vdots \\
\tilde{r}_p(\colbf{x})
\end{pmatrix}
+ \tilde{\varepsilon}
\right)
= R(\colbf{x})+ \varepsilon,
\end{equation}
where the $\tilde{r}_j(\colbf{x})$, $d+1 \leq j \leq p$, are arbitrary real functions from $\Rd$ to
$\R$.
\end{defi}

\subsubsection{Link with the Principal Component Analysis}
\label{subsub:PCA}

Assume that:
\begin{enumerate}
\item $\tilde{r}_j(\colbf{x}) = \tilde{\mu}_j$ for all $j\in\{d+1,\ldots p\}$,
\item the covariance matrix of $\colbf{x}$, $\Sigma_x = \mathrm{Diag}(\sigma_1^2, \ldots, \sigma_d^2)$
  is diagonal with $\sigma_1 \geq \sigma_2 \geq \ldots \geq \sigma_d$,
\item the Gaussian noise $\tilde{\varepsilon}$ have the following covariance matrix $\Sigma_\varepsilon
= \mathrm{Diag}(0,\ldots,0,\sigma^2,\ldots,\sigma^2)$ with $\sigma<\sigma_d$.
\end{enumerate}
then the vector $\colbf{y}$ is a Gaussian random vector
$$
\colbf{y} \sim \mathcal{N}(\mu, \Sigma)
$$
with
$$
\mu = \tabbf{Q}'
\begin{pmatrix}
\tilde{\mu}_1 \\ \vdots \\ \tilde{\mu}_d\\ \tilde{\mu}_{d+1} \\ \vdots \\ \tilde{\mu}_p
\end{pmatrix}
\qquad \mbox{and} \qquad
\Sigma = \tabbf{Q}
\begin{pmatrix}
\sigma_1 \\
         & \ddots &          &         & {0}    \\
         &        & \sigma_d \\
         &        &          &  \sigma \\
         & {0}    &          &         & \ddots \\
         &        &          &         &        & \sigma
\end{pmatrix}
\tabbf{Q}'
$$
and $\vecbf{a}^1, \ldots, \vecbf{a}^d$ are the $d$ first eigenvectors given by the PCA.

\subsubsection{Link with the Probabilistic Principal Component Analysis}
The probabilistic PCA \cite{TB99} is a model of the form
\begin{equation}\label{eq:PPCA}
\colbf{y} = \mu + \tabbf{W} \colbf{x}+ \varepsilon,
\end{equation}
with $\tabbf{W}$ a $(p,d)$ matrix, $\colbf{x}$ a $d$-dimensional isotropic Gaussian vector, i.e.
$\colbf{x}\sim \mathcal{N}(0,I_d)$, and $\varepsilon$ a $p$-dimensional centered Gaussian
random vector with covariance matrix $\sigma^2 I_p$. The law of $\colbf{y}$ is not modified if
$\tabbf{W}$ is right multiplied by a $(d,d)$ unitary matrix, it is thus possible to impose to the
rows of $\tabbf{W}$ to be orthogonal (assuming that $\tabbf{W}$ is of full rank).

The following proposition is then straightforward
\begin{prop}
Assume that $\tilde{\varepsilon}$ (and thus $\varepsilon$) is an isotropic Gaussian noise, i.e.
$\Sigma_{\tilde{\varepsilon}} = \sigma^2 I_p$, take $\tilde{r}_j = \tilde{\mu}_{j}$ for all $d+1\leq j\leq p$
and set
$$ \tabbf{W} = \tabbf{P}^\prime
\begin{pmatrix}
\sigma_1  & 0      & \ldots & 0 \\
   0      & \ddots & \ddots & \vdots\\
 \vdots   & \ddots & \ddots & 0 \\
 0        & \ldots &    0   &\sigma_d
\end{pmatrix}
.
$$
The resulting Probabilistic Semi-Linear Auto-Associative Model is a Probabilistic Principal
Component Analysis.
\end{prop}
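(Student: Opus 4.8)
The plan is to substitute the three hypotheses directly into the defining identity (\ref{eq:model}) and, after regrouping the deterministic terms, to recognise the probabilistic PCA form (\ref{eq:PPCA}); the argument is essentially a bookkeeping of which pieces are random and which are constant. First I would use the hypothesis $\tilde{r}_j = \tilde{\mu}_j$ for $d+1\le j\le p$ to freeze the lower block of the internal vector in (\ref{eq:model}), so that this vector splits as $(x_1,\ldots,x_d,0,\ldots,0)' + (0,\ldots,0,\tilde{\mu}_{d+1},\ldots,\tilde{\mu}_p)'$. Writing $\tabbf{Q}' = (\tabbf{P}' \mid \bar{\tabbf{P}}')$ by blocks of columns (its first $d$ columns being $\vecbf{a}^1,\ldots,\vecbf{a}^d$, i.e.\ $\tabbf{P}'$, and the remaining ones $\bar{\tabbf{P}}'$), the model becomes $\colbf{y} = \tabbf{P}'\colbf{x} + \bar{\tabbf{P}}'(\tilde{\mu}_{d+1},\ldots,\tilde{\mu}_p)' + \tabbf{Q}'\tilde{\varepsilon}$.

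Next I would treat the noise. Setting $\varepsilon := \tabbf{Q}'\tilde{\varepsilon}$ and using that $\tilde{\varepsilon}\sim\Normal{0}{\sigma^2 I_p}$ together with the orthogonality $\tabbf{Q}\tabbf{Q}' = I_p$, one gets $\varepsilon \sim \Normal{0}{\sigma^2 \tabbf{Q}'\tabbf{Q}} = \Normal{0}{\sigma^2 I_p}$, which is exactly the PPCA noise. This is the only step with genuine content: it is precisely the rotational invariance of the isotropic Gaussian, and it is where the isotropy hypothesis $\Sigma_{\tilde{\varepsilon}} = \sigma^2 I_p$ is indispensable — for a non-scalar diagonal $\Sigma_{\tilde{\varepsilon}}$ the rotated noise $\tabbf{Q}'\tilde{\varepsilon}$ would not have covariance $\sigma^2 I_p$ and the two models would genuinely differ.

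Finally I would reconcile the latent vector. Carrying over the diagonal-covariance hypothesis of Section~\ref{subsub:PCA}, $\colbf{x}\sim\Normal{\mu_x}{\mathrm{Diag}(\sigma_1^2,\ldots,\sigma_d^2)}$, I would factor $\colbf{x} = \mu_x + D\colbf{z}$ with $D = \mathrm{Diag}(\sigma_1,\ldots,\sigma_d)$ and $\colbf{z}\sim\Normal{0}{I_d}$, so that $\tabbf{P}'\colbf{x} = \tabbf{P}'\mu_x + \tabbf{P}'D\,\colbf{z} = \tabbf{P}'\mu_x + \tabbf{W}\colbf{z}$ with $\tabbf{W} = \tabbf{P}'D$. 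Collecting the constants into $\mu := \bar{\tabbf{P}}'(\tilde{\mu}_{d+1},\ldots,\tilde{\mu}_p)' + \tabbf{P}'\mu_x$ then yields $\colbf{y} = \mu + \tabbf{W}\colbf{z} + \varepsilon$ with $\colbf{z}\sim\Normal{0}{I_d}$ and $\varepsilon\sim\Normal{0}{\sigma^2 I_p}$ independent (both being measurable transforms of the independent pair $(\colbf{x},\tilde{\varepsilon})$), i.e.\ exactly (\ref{eq:PPCA}). As a consistency check I would verify that this $\tabbf{W}$ already sits in the canonical PPCA gauge: its $j$-th column equals $\sigma_j\vecbf{a}^j$, so the columns are orthogonal since the $\vecbf{a}^j$ are orthonormal, matching the remark that $\tabbf{W}$ may be taken with orthogonal columns. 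There is no real obstacle; the only points requiring care are the rotational invariance invoked for the noise and the absorption of the arbitrary latent parameters $(\mu_x, D)$ into $\mu$ and $\tabbf{W}$, which is what makes the general $\Normal{\mu_x}{\Sigma_x}$ latent compatible with the standardised $\Normal{0}{I_d}$ latent of PPCA.
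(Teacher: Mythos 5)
Your proposal is correct: the paper itself offers no proof (it merely calls the proposition ``straightforward''), and your argument supplies exactly the verification that is left implicit --- substituting $\tilde{r}_j=\tilde{\mu}_j$ into (\ref{eq:model}), splitting $\tabbf{Q}'=(\tabbf{P}'\mid\bar{\tabbf{P}}')$, using the rotational invariance of the isotropic noise via $\tabbf{Q}'\tabbf{Q}=I_p$, and standardising the latent variable so that $\tabbf{W}=\tabbf{P}'\,\mathrm{Diag}(\sigma_1,\ldots,\sigma_d)$ absorbs $\Sigma_x$. You were also right to flag, and then explicitly import, the diagonal-covariance hypothesis on $\Sigma_x$ from Section~\ref{subsub:PCA}, since the proposition's statement uses the $\sigma_j$ without restating that assumption.
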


For this simple model there exists a close form of the posterior probability of
$\colbf{y}$ and for the maximum likelihood of the parameters of the model.
In particular, the matrix $\tabbf{W}$ can be estimated up to a rotation and spans the principal
subset of the data.

%

\section{Semi-Linear PCA}
\label{sec:NLPCA}

Our aim is now to generalize the PCA model we present in part (\ref{subsub:PCA}) to the semi-linear case.
We observe that in the PCA model, if the matrix $\colbf{P}$ is known, then we are able to know the random
variable $\colbf{x}$. This observation lead us to formulate the following hypothesis about the noise
$\tilde{\varepsilon}$:

\begin{description}
\item[N] the Gaussian noise $\tilde{\varepsilon}$ have the following covariance matrix
$\Sigma_{\tilde{\varepsilon}} = \mathrm{Diag}(0,\ldots,0,\sigma^2,\ldots,\sigma^2)$.
\end{description}

Expressing $\colbf{y}$ in the basis  $\mathcal{B}$ (definition \ref{def:GSLAAM}) we get the following
expression for $\tilde{\colbf{y}}$:
\begin{equation} \label{eq:canmodel}
\begin{pmatrix}
\tilde{y}_1     \\ \vdots \\ \tilde{y}_d \\
\tilde{y}_{d+1} \\ \vdots \\
 \tilde{y}_p
\end{pmatrix}
 =
\begin{pmatrix}
x_1 \\ \vdots \\ x_d \\
\tilde{r}_{d+1}(\colbf{x}) \\ \vdots \\
 \tilde{r}_p(\colbf{x})
\end{pmatrix}
+
\begin{pmatrix}
0     \\ \vdots \\ 0 \\
\tilde{\varepsilon}_{d+1} \\ \vdots \\
\tilde{\varepsilon}_p
\end{pmatrix}.
\end{equation}
In other word, the coordinates of $\tilde{\colbf{y}}$ can be split in two sets. The $d$ first
coordinates are the Gaussian random vector $\colbf{x}$, while the $p-d$ remaining coordinates
are a random vector $\colbf{z}$ which is conditionally to $\colbf{x}$ a Gaussian random vector
$\normal{\tilde{r}(\colbf{x} )}{\sigma^2 I_{p-d}}$. Observe that the regression functions
are dependents of the choice of the vectors $\vecbf{a}_{d+1},\ldots,\vecbf{a}_p$ and that, as
the noise $\varepsilon$ lives in the orthogonal of $V$, we have $\colbf{x} = \mathbf{P} \colbf{y}$.

\subsection{Maximum Likehood Estimates}

The parameters we have to estimate are the position and correlation parameters $\mu_x$ and $\Sigma_x$
for the $\colbf{x}$ part and $(\sigma^2, \tilde{r})$ for the non-linear part. Given a set of $n$ points
$\tabbf{Y} = (\colbf{y}_1,\ldots,\colbf{y}_n)'$ in $\Rp$, we get by projection two sets of $n$ points
$\tabbf{X} = (\colbf{x}_1,\ldots,\colbf{x}_n)' = \tabbf{Y} \tabbf{P}'$ in $\Rd$, and
$\tabbf{Z} = (\colbf{z}_1,\ldots,\colbf{z}_n)' = \tabbf{Y} \tabbf{\bar{P}}'$ in $\mathbb{R}^{p-d}$.

Standard calculation give the maximum likehood for $\mu_x$ and $\Sigma_x$
\begin{equation}\label{eq:mu}
\hat{\mu}_x = \frac{1}{n} \sum_{i=1}^n \colbf{x}_i.
\end{equation}
and
\begin{equation}\label{eq:sigma_x}
\hat{\Sigma}_x = \frac{1}{n}
 \sum_{i=1}^n (\colbf{x}_i-\hat{\mu}_x)(\colbf{x}_i-\hat{\mu}_x)'.
\end{equation}

The maximum likehood of $\sigma^2$ is given by
\begin{equation}\label{eq:sigma}
\hat{\sigma}^2 = \frac{1}{n(p-d)}
 \sum_{i=1}^n \left\| \colbf{z}_i - \hat{\tilde{R}}(\colbf{x}_i) \right\|^2.
\end{equation}

It remains to estimate $\tilde{R}$. We consider two cases : $\tilde{R}$ is a linear function
and $\tilde{R}$ is a linear combination of the elements of a B-Spline function basis.
The linear case is just a toy example that we will use for comparison with the
additive B-Spline case. In the non-linear case, we have to estimate a function from
$\Rd$ to $\R^{p-d}$. As we say in the introduction this is a difficult task and we restrict
ourself to a generalized additive model (GAM) by assuming that the function $\tilde{R}$ is
additive, i.e.
\begin{equation}
\label{eq:additive}
\tilde{R}(\colbf{x}) = \sum_{j=1}^d \tilde{\colbf{r}}^j(x_j),
\end{equation}
where each $\tilde{\colbf{r}}^j$ is a map from $\R$ into $\R^{p-d}$.

\subsection{Linear Auto-Associative Models}
In the linear case, we are looking for a vector $\mu$ and a $(d,p-d)$ matrix $\colbf{R}$
minimizing
$$
\sum_{i=1}^n \left\| \colbf{{z}}_i - \mu - \colbf{R}'\colbf{x}_i
\right\|^2.
$$
It is easily verified that
$$
\hat{\mu} = \frac{1}{n}\sum_{i=1}^n \left(\colbf{{z}}_i - \tabbf{R}'\colbf{x}_i\right)
           = \hat{\mu}_z - \tabbf{R}'\hat{\mu}_x.
$$
Setting $\colbf{\bar{X}} = \colbf{X} - \tabbf{1}\hat{\mu}_x^\prime$ and $\colbf{\bar{Z}} = \colbf{X}
- \tabbf{1}\hat{\mu}_z^\prime$, where $\tabbf{1}$ represent a vector of
size $n$ with $1$ on every coordinates.
Assuming that the matrix $\colbf{\bar{X}}^\prime\colbf{\bar{X}}$
is invertible, standard calculus show that
$$
 \hat{\colbf{R}} = (\colbf{\bar{X}}^\prime\colbf{\bar{X}})^{-1}
\colbf{\bar{X}}^\prime\colbf{\bar{Z}}.
$$

Finally, using the decomposition in eigenvalues of the covariance matrix of $\tabbf{Y}$, it is
straightforward to verify the following theorem
\begin{theo}
\label{th:pca}
If the $d$ orthonormal
vectors $\colbf{a}^{1},\ldots, \colbf{a}^{d}$ are the eigenvectors associated
with the $d$ first eigenvalues of the covariance matrix of $\colbf{{Y}}$
then the estimated auto-associative model is the on obtain by the PCA.
\end{theo}

\subsection{Additive Semi-Linear Auto-Associative Models}
\label{subsec:ASLAAM}
In order to estimate the regression functions $\colbf{\tilde{r}})^j, j=1\ldots,d$, we express
them as a linear combination of $m$ B-Spline functions basis $s^{jl}$ where $m$ is a number
chosen by the user.  We have thus to estimate the set of coefficients
$(\alpha_{jl})$, $j=1,\ldots,d$, $l=0,\ldots m$ by minimizing
$$
\sum_{i=1}^n \left\| \colbf{z}_i - \alpha_0 - \sum_{j=1}^d  \sum_{l=1}^m \alpha_{jl} \colbf{s}^{jl}(x_{ij})
\right\|^2.
$$
Standard regression techniques give then the estimates
$$
\hat{R}(\colbf{x}) = \hat{\alpha}_0 + \sum_{j=1}^d \sum_{l=1}^m \hat{\alpha}_{jl} \colbf{s}^{jl}(x_{ij}),
\qquad \text{ with } \hat{\alpha} = ((\colbf{{S}}^\prime\colbf{{S}})^{-1}\colbf{{S}}^\prime\colbf{Z})
$$
where $\colbf{S}$ is the design matrix which depends
of the knots position, degree of the B-Spline and the number of control points chosen by the user
\cite{Prautzsch2002}.

The estimated regression function $\colbf{r}^j$, for $j=1,\ldots,d$ are then given by the formula
$$\hat{\colbf{r}}^j = \sum_{l=1}^m \hat{\alpha}_{jl} \colbf{s}^{jl}. $$

\subsection{Estimation in practice}
The drawback of the previous maximum likehood equations is that, given the projection matrix
$\colbf{P}$, we have to perform a rotation of the original data set and next to perform an
inverse rotation of the estimated model. In practice, we avoid such computations by estimating
the model using the following steps:
\begin{itemize}
\item {\bf (C)} Center and (optionally) standardize the data set $\mathbf{Y}$: obtain $\bar{\tabbf{Y}}$,
\item {\bf (P)} Compute the projected data set $\tabbf{X} = \bar{\tabbf{Y}} \tabbf{P}'$ ($\tabbf{X}$ is centered),
\item {\bf (R)} Compute the regression $\bar{\tabbf{Y}} \sim \tabbf{X}$ (without intercept),
\item {\bf (S)} Compute the log-likelihood and the BIC criteria.
\end{itemize}

As we can see the main difference is in the regression step: we estimate directly a function
from $\Rd$ to $\Rp$. In practice, as the non-linear part of the model is in $V^\bot$,
the regression function we obtain numerically give the identity function in the $V$ space.

\subsection{Model Selection}

Since a Semi-linear PCA model depends highly of the projection matrix $\mathbf{P}$, model
selection allows to select among various candidate the best projection.
Several criteria for model selection have been proposed in the literature and the widely
used are penalized likelihood criteria. Classical tools for model selection include the
AIC \cite{Akaike74} and BIC \cite{Schwarz1978} criteria. The Bayesian Information Criterion
(BIC) is certainly the most popular and consists in selecting the model which penalizes the
likelihood by $\frac{\gamma(\mathcal{M})}{2} \log(n)$ where $\gamma(\mathcal{M})$ is the number
of parameters of the model $\mathcal{M}$ and $n$ is the number of observations.

In practice we will fix a set of vectors $\vecbf{a}^1,\ldots, \vecbf{a}^{d_{\max}}$ given either
by the contiguity analysis (section \ref{sec:contiguityAnalysis}) or the PCA and select the dimension
of the model using the BIC criteria because of its popularity. The projection matrices we compare are
thus $\mathbf{P}=(\vecbf{a}^1)'$, $\mathbf{P}=(\vecbf{a}^1,\vecbf{a}^2)'$,... and so on.

\section{Contiguity Analysis}
\label{sec:contiguityAnalysis}

%
Given $(\vecbf{a}^1, \ldots, \vecbf{a}^{d})$ an orthonormal set of vector in $\Rp$,
an index $I$:~$\R^{p\times d}\to\R_+$ is a functional measuring the interest
of the projection of the vector $\vecbf{y}$ on $\mathrm{Vec}({\vecbf{a}^1},\ldots,{\vecbf{a}^d})$
with a non negative real number. A widely used choice of $I$ is
$I(\proscal{\vecbf{a}^1}{\colbf{y}},\ldots,\proscal{\vecbf{a}^d}{\colbf{y}})=
tr(\V{\colbf{P}\colbf{y}})$, the projected variance. This is the criteria maximized in the usual
PCA method \cite{Jolli}.

The choice of the index $I$ is crucial in order to find
"good" parametrization directions for the manifold to be estimated.
We refer to \cite{Huber} and \cite{Jones} for a review on this topic
in a projection pursuit setting. The meaning of the word "good" depends
on the considered data analysis problem. For instance, Friedman {\it et al}
\cite{Fried74, Fried87}, and more recently Hall \cite{Hall}, have proposed an
index which measure the deviation from the normality in order to
reveal more complex structures of the scatter plot.
An alternative approach can be found in \cite{Caussinus} where a
particular metric is introduced in PCA in order to detect clusters.
We can also mention the index dedicated to outliers detection \cite{Pan}.

Our approach generalizes the one we present in \cite{GI} and consists in
defining a contiguity coefficient similar to Labart one's \cite{Lebart00} whose
maximization allows to unfold nonlinear structures.

A \emph{contiguity} matrix is a $n\times n$ boolean matrix $M$ whose
entry is $m_{ij}=1$ if data points $i$ and $j$ are "neighbors" and $m_{ij}=0$
otherwise. Lebart proposes to use a threshold $r_0$ to the set of $n(n-1)$
distances in order to construct this matrix but the choice of $r_0$
could be delicate. In \cite{GI} we propose to use a first order contiguity matrix,
i.e. $m_{ij} = 1$ iff $i$ is the nearest neighbor of $j$ in order to construct the proximity
graph. In order to get a more robust estimate of the neighbor structure, it is possible
to generalize this approach and to use a $k$-contiguity matrix, i.e.
$m_{ij} = 1$ iff $i$ is one of the k-nearest neighbor of $j$.

The contiguity matrix being chosen, we compute the \emph{local covariance} matrix
\begin{equation}
\tabbf{V}^* = \frac{1}{2kn} \sum_{i=1}^n \sum_{j=1}^n m_{ij}
(\colbf{y}_i - \colbf{y}_j)(\colbf{y}_i - \colbf{y}_j)'
\end{equation}
and the total variance matrix
\begin{equation}
\tabbf{V} = \frac{1}{n} \sum_{i=1}^n
(\colbf{y}_i - \hat{\mu})(\colbf{y}_i - \hat{\mu})'
\end{equation}
The axis of projection are then estimated by maximizing the contiguity index
\begin{equation}
I(\vecbf{a}^1, \ldots, \vecbf{a}^{d}) = \sum_{i=1}^{d}
\frac{\vecbf{a}^{i\prime}\tabbf{V}^* \vecbf{a^{i}}}{\vecbf{a}^{i\prime}\tabbf{V} \vecbf{a}^{i}}.
\end{equation}
Using standard optimization techniques, In can be shown that the resulting axis are the $d$
eigenvectors associated with the largest eigenvalues of the matrix $V^{*-1}V$.

\section{Examples}
\label{sec::apps}

We first present two illustrations of the estimation principle of PSLAAM on low dimensional data
(Section~\ref{sub:parun} and \ref{sub:pardeux}). These two simulated examples are very similar from the
one we use in our previous article with S. Girard \cite{GI}. Second, PSLAAM is applied to an
astronomical analysis problem in Section~\ref{sub:partrois}.

Similarly, we always use an additive B-Spline regression model for the estimation
of the regression function $\tilde{R}$ (section \ref{subsec:ASLAAM}). The B-Spline are of degree
3 and we select the number of control points using the BIC .

\subsection{First example on simulated data}
\label{sub:parun}

The data are simulated using a one-dimensional regression function in $\R^3$.
The equation of the AA model is given by
\begin{equation}
\label{tirebouch}
x \to (x, \sin{x}, \cos{x}),
\end{equation}
and thus $P(x,y,z)=x$ .
The first coordinate of the random vector is sampled from a centered Gaussian distribution
with standard deviation $\sigma_x = 3$ a thousand times. An independent noise with standard
deviation $\sigma = 1$ has then been added to the $y$ and $z$ coordinates.

The axis of projection have been computed thanks to the contiguity analysis (section
\ref{sec:contiguityAnalysis}) using the 3 nearest neighbors for the proximity graph.
The correlations between the projected data set and the original data set are
\begin{center}
\begin{tabular}{lccc}
         & X            & Y            & Z\\
Proj$_1$ & 0.9999680850 & -0.0005794581 & 0.0089238830
\end{tabular}
\end{center}
which show that the first axis given by the contiguity analysis is very close from the $x$-axis
as it was expected. The result of the contiguity analysis can be visualized in the figure \ref{fig:tirebouchon1}.
\begin{figure}[htb]
\includegraphics[width =11cm]{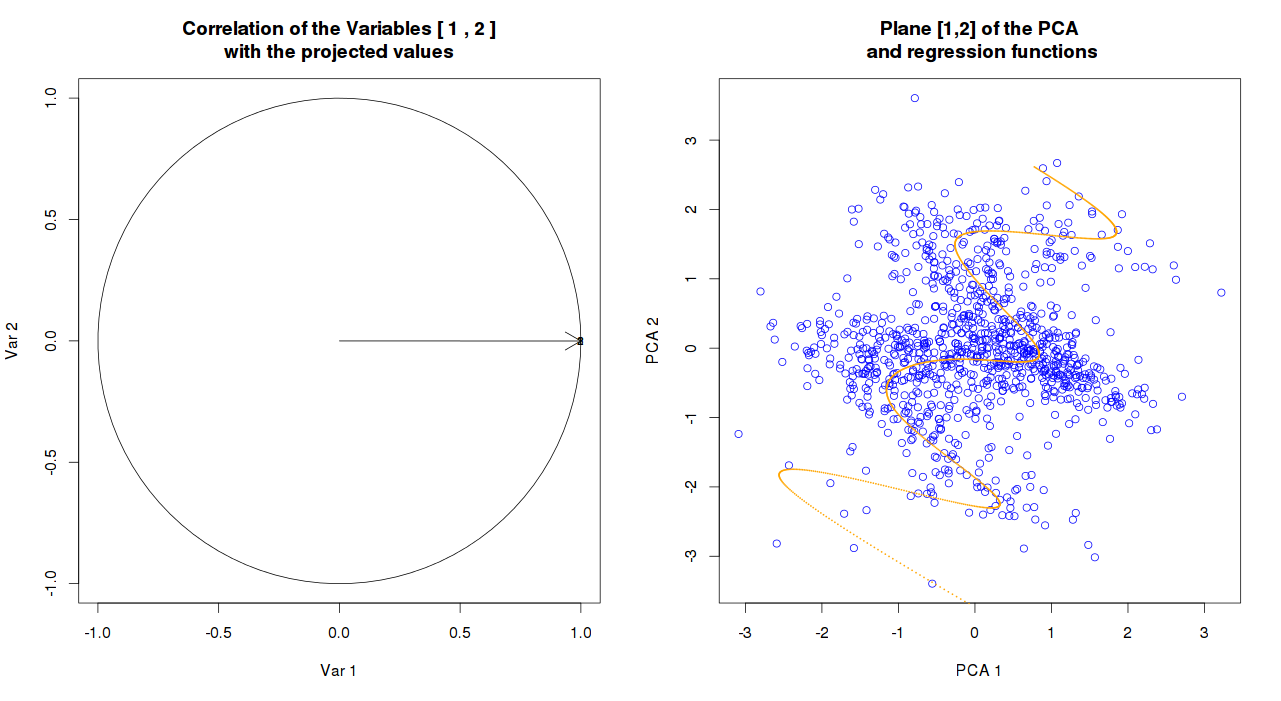}
\caption{Correlation of the first axis obtain using a contiguity analysis with the $X$ and $Y$ variables
and representation of the scatter-plot and regression function in the main PCA plan. This graphic has been obtained
with R using the \emph{plot} command of the aam library.}
\label{fig:tirebouchon1}
\end{figure}

The projected variance on the first axis is $9.20496$ which is also very close from $9$.
We use the BIC criteria in order to select the dimension of the model and the number of control points.
A summary of the tested model is given in the table \ref{tab:simul1d}
\begin{center}
\begin{table}[htb]
\definecolor{tcA}{rgb}{0,1,1}
\begin{tabular}{|l|c|r|r||r|r|}
\hline
   &     & \multicolumn{2}{c||}{Contiguity Analysis} & \multicolumn{2}{c}{PCA} \\
\hline
      & dim & BIC        & Residual Variance & BIC     & Residual Variance\\
\hline
linear & 1  & 9987,13(5) & 1.439             & 11495.8 & 1.43864 \\
\hline
       & 2  & 10609.7(10)& 1.40815  & & \\
\hline
\hline

9 & 1 & 11247.6(29) & 1.06557  & 11058.4 & 1.06407 \\
\hline
  & 2 & 11621(58) & 1.1086 & &\\
\hline

10 & 1 & 11060.7(32) & 0.986316 & 10926.3 & 0.985777 \\
\hline
   & 2 & 11469(64) & 0.913605 & &\\
\hline

11 & 1 & 10853.7(35) & 0.941064 & 10855 & 0.92711 \\
\hline
   & 2 & 11503(70) & 0.90682 & & \\
\hline

 12 & 1 & \multicolumn{1}{>{\columncolor{tcA}}r } {10844.6(38)} & 0.92717
        & \multicolumn{1}{>{\columncolor{tcA}}r } {10845(38)} & 0.941661 \\
\hline
  & 2  & 11525(76) & 0.892669 & &\\
\hline

13 & 1 & 10871.4(41) & 0.929965 & 10871.6 & 0.927976 \\
\hline
   & 2  & 11568.4(82) & 0.891119 & & \\
\hline

14 & 1 & 108887.5(44) & 0.927834 & 10888.3 & 0.927976 \\
\hline
   & 2  & 11604(88) & 0.885939  & &\\
\hline
\end{tabular}
\caption{Values of the BIC criteria for $d=1$ and $d=2$ and for various number of control points (given in the
first column). The number of free parameters of each model is given in parenthesis. The BIC criteria selects the
model of dimension 1 with 11 control points. The axis of projection can be either the one obtained by contiguity analysis
or the one obtained using the PCA.}
\label{tab:simul1d}
\end{table}
\end{center}

Finally the result of the regression is drawn in the figure \ref{fig:tirebouchon2}.
\begin{figure}[h]
\includegraphics[height = 4cm]{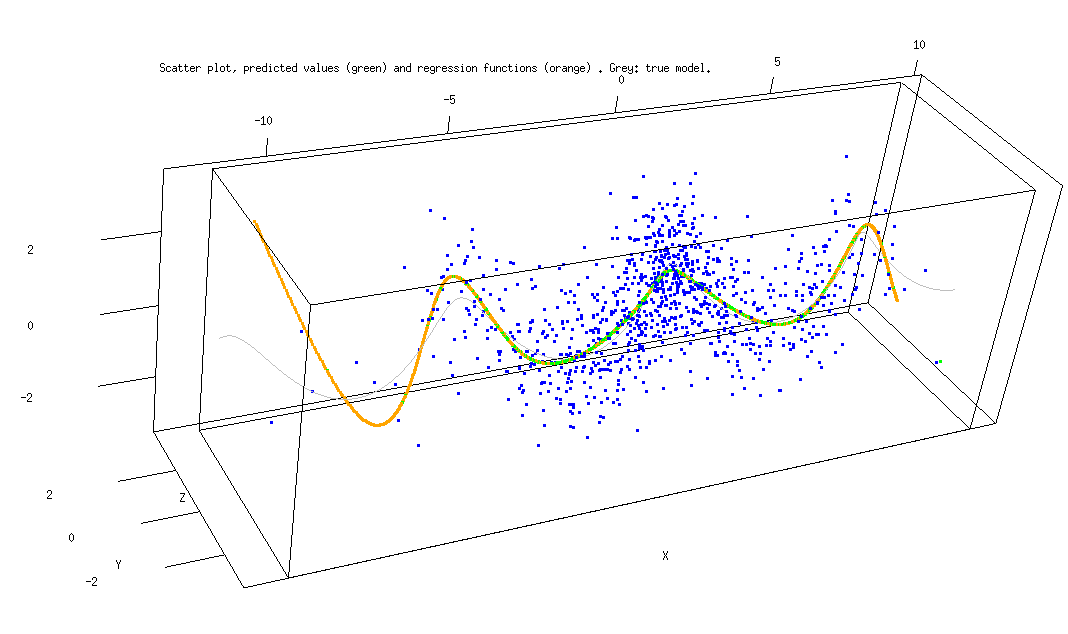}
\caption{The simulated scatter-plot (blue), the estimated AAM (orange) and the true AAM (grey).
This graphic is obtained with R using the \emph{draw3d} command of the aam library.}
\label{fig:tirebouchon2}
\end{figure}

\subsection{Second example on simulated data}
\label{sub:pardeux}
In our second example the AAM is given by
\begin{equation}
\label{chap}
(x,y)  \to \left(x,y,\cos(\pi r/3) (1 - \exp(-64r^2)) \exp(0.2r) \right)
\end{equation}
with $r=\sqrt{x^2+y^2}$ and thus $P(x,y,z)=(x,y)$. The first two coordinates of the random vector are
sampled from a centered Gaussian distribution with covariance matrix
$$ \Sigma_x = \begin{pmatrix}
                1.8 & 0 \\
                0   & 1.5 \\
              \end{pmatrix}
$$
and $n=1000$ points are simulated. An independent noise with standard
deviation $\sigma = 0.5$ has then been added to the $z$ coordinate.

The correlations between the projected data set and the original data are
\begin{center}
\begin{tabular}{llll}
         & X            & Y            & Z\\
Proj$_1$ & 0.99924737   & -0.1488330 & 0.0179811 \\
Proj$_2$ & -0.14239437 & 0.98532966  & 0.01396622
\end{tabular}
\end{center}
which show that the $(x,y)$-plan is the plan essentially chosen by the contiguity analysis.
The result of the contiguity analysis is displayed in the figure \ref{fig:hat}.

\begin{figure}[htb]
\includegraphics[width = 12cm]{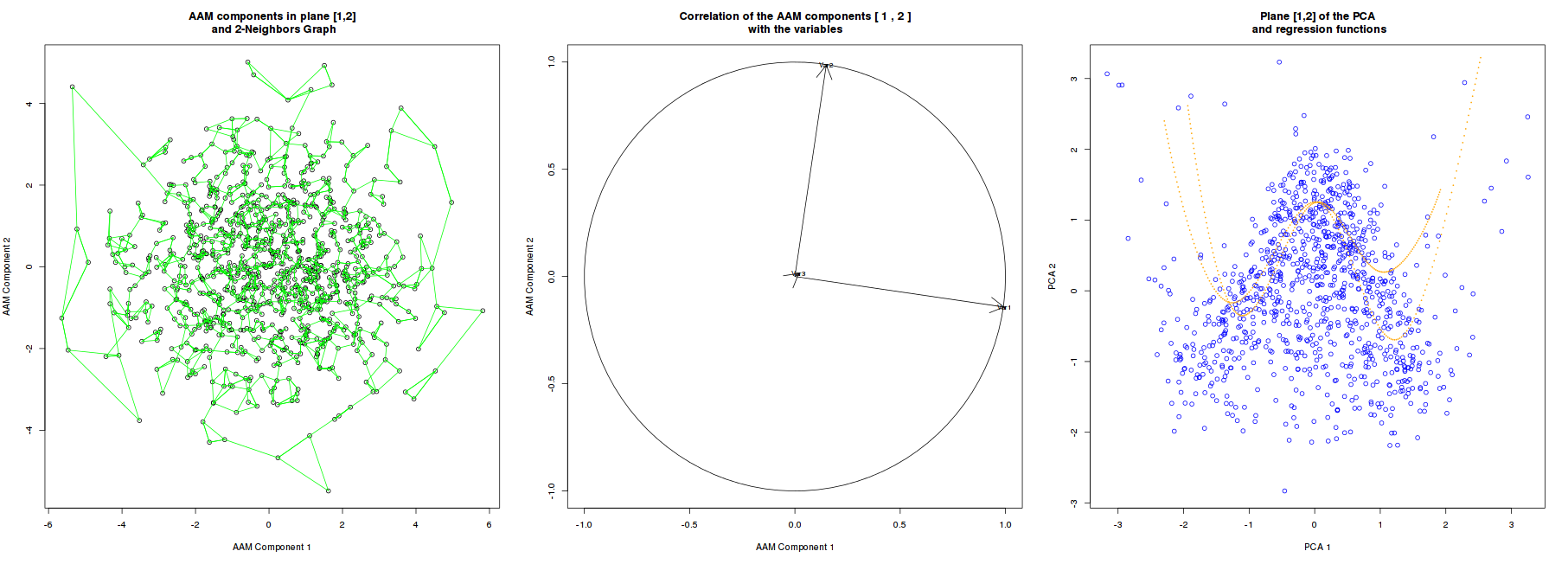}
\caption{The AAM components and the 2-neighbors graph, the correlation circle of the AAM components with the variables
and representation of the scatter-plot in the PCA plan. These pictures have been obtained using the \emph{plot} command
of the aam library.}
\label{fig:hat}
\end{figure}

The BIC criteria select 7 control points. A summary of the tested model is given in the table \ref{tab:simul2d}.
The selected model over-estimate the residual variance by a factor 2. It is not a surprising result as the original
model is not additive and we cannot expect to reconstruct it exactly. We don't show the results with the PCA
as the first axis this method select is the $Z$-axis which is clearly the wrong parametrization.

\begin{center}
\definecolor{tcA}{rgb}{0,1,1}
\begin{table}[htb]
\begin{tabular}{|l|c|r|r|}
\hline
 & {dim} & {BIC} & {Residual Variance}\\
\hline
{linear} & {1} & 11024.7(5) & 1.78335\\
\hline
 & {2} &  10.832.7(10) & 1.20314\\
\hline
\hline

 & {2} & {10654.2(34)} & {0.852753}\\
\hline
{6} & {1} & 11072.6(20) & {1.68675}\\
\hline
 & {2} & 10716.1(40) & {0.870381}\\
\hline

{7} & {1} & 11014.2(23) & {1.55841}\\
\hline
\rowcolor{tcA}
   & {2} & {10.213.6(46)} & {0.505186}\\
\hline

{8} & {1} & 11.031.6(26) & {1.55334}\\
\hline
 & {2} & {10.216.6(52)} & {0.486137}\\
\hline
{9} & {1} & 11051.7(29) & {1.55229}\\
\hline
 & {2} & {10255(58)} & {0.484644}\\
\hline
\end{tabular}
\caption{Values of the BIC criteria for $d=1$ and $d=2$ and for various number of control points (given in the
first column). The number of free parameters of each model is given in parenthesis. The BIC criteria selects the
model of dimension 2 with 7 control points.}
\label{tab:simul2d}
\end{table}
\end{center}

The true model and the estimated model obtained with an additive B-Spline regression
are given in the figure \ref{fig:hat3}.
\begin{figure}[htb]
\begin{minipage}{0.32\textwidth}
\includegraphics[width = 3.7cm]{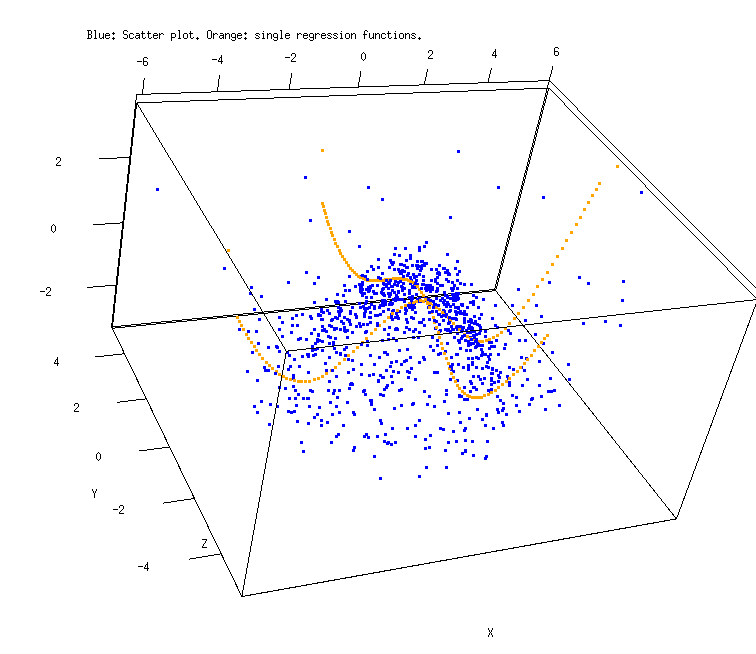}
\begin{center}
(a)\end{center}
\end{minipage}
\begin{minipage}{0.32\textwidth}
\includegraphics[width = 3.7cm]{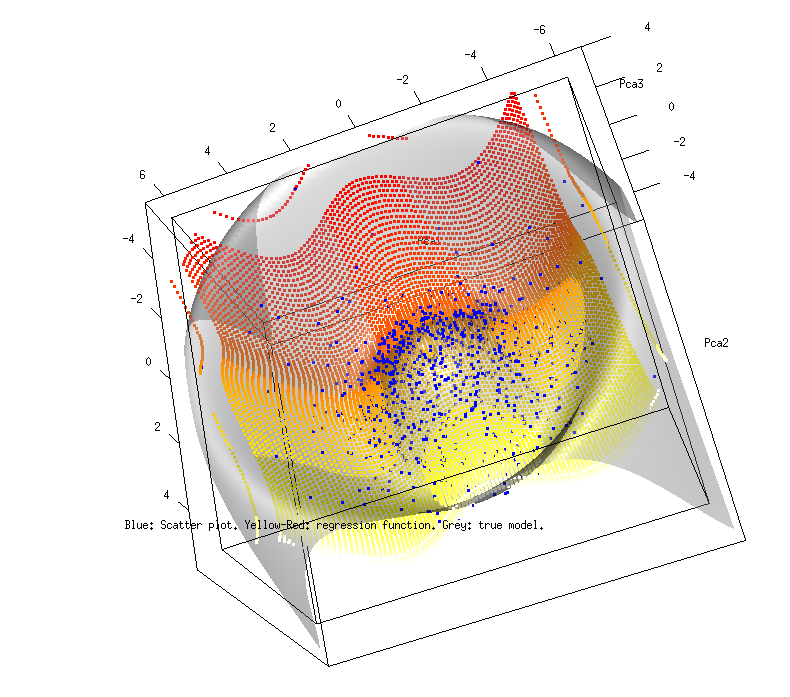}
\begin{center}
(b)\end{center}
\end{minipage}
\begin{minipage}{0.32\textwidth}
\includegraphics[width = 3.7cm]{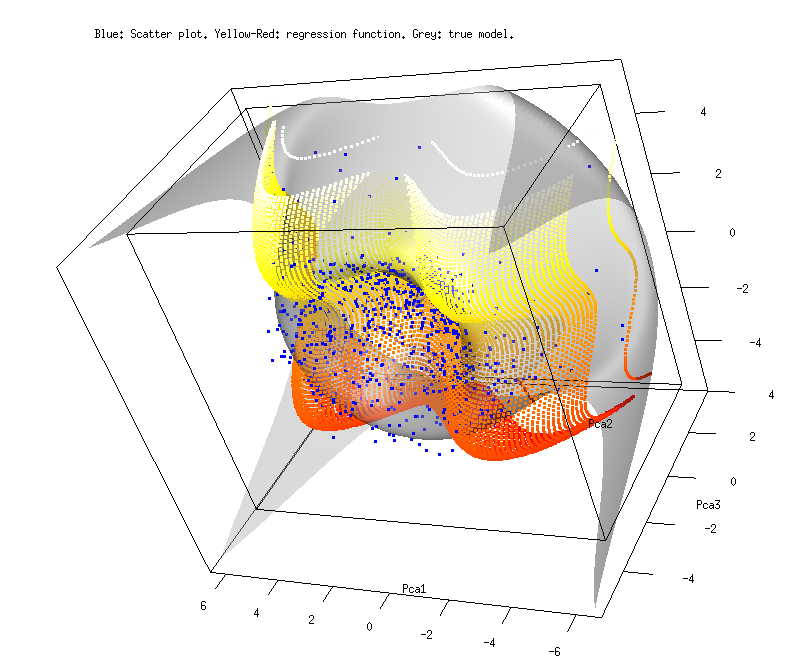}
\begin{center}
(c)\end{center}
\end{minipage}
\label{fig:hat3}
\caption{(a) the original data set (blue) and the individual regression functions. (b)-(c) two views
of the original manifold (grey) and of the extrapolated manifold using an additive B-Spline regression
(yellow-red). These images have been obtained using the \emph{draw3d} function of the aam library.}
\end{figure}

\subsection{Example in spectrometry analysis}
\label{sub:partrois}

Finally we illustrate the performance of the semi-linear PCA on a real data set.
The data consists of $19$-dimensional spectral information of 487 stars
\cite{Stock1999, Stock2004, Stock2008, Garcia2008}
and they have been classified in 6 groups. They have been modeled by \cite{Scholz2007} using an
auto-associative neural networks based on a 19-30-10-30-19 network. Using the terminology of this article
the model proposed by M. Scholz and its co-authors is an auto-associative model of dimension 10.
We select the model using the BIC criteria. The main results are the following:
\begin{enumerate}
\item The axis of projection given by the PCA outperform largely the results we obtain with the
contiguity analysis, for any choice of control point.
\item The BIC criteria retains a model of dimension 5 with 9 Control Points (871 parameters)
when we use a non-linear regression step. The residual variance is $\sigma^2 =0.0080763$
while the total variance (inertia) of the data was 26.59832.
\item The BIC criteria retains a model of dimension 12 (307 parameters)
when we use a linear regression step. Observe that in this case, we are performing
an usual PCA (theorem \ref{th:pca}).
\end{enumerate}

The data cloud in the main PCA space with the values predicted by the model is displayed in the figure
\ref{fig:spect_predicted}.
\begin{figure}[!htb]
\includegraphics[width =9cm]{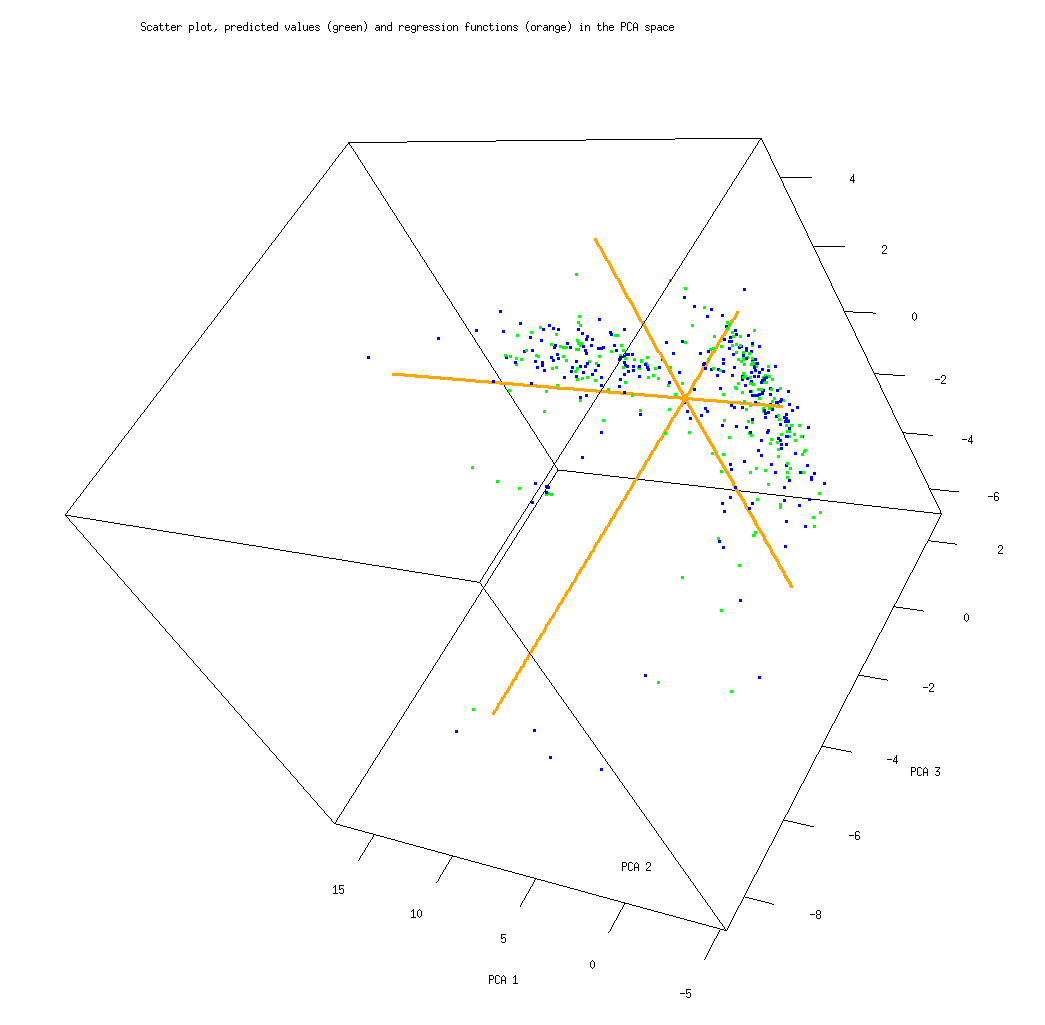}
\caption{Data cloud in the main PCA space with the predicted values (green).
}
\label{fig:spect_predicted}
\end{figure}

A summary of some tested model are given in the table \ref{tab:spectrometry} and
some planes of projection are displayed in the picture \ref{fig:spect_summary}.
We visualize each components of the regression functions by setting all, except one,
predictors to zero and we represent the evolution of the regression function
in the $\R^{14}$ space in the graphic \ref{fig:spect_plot1}.

\begin{center}
\definecolor{tcA}{rgb}{0,1,1}
\begin{table}[!htb]
\begin{tabular}{|l|r|r|r|}
\hline
      \multicolumn{4}{|c|}{PCA} \\
\hline
Control Points & BIC value  &   dim   & Residual variance \\
\hline
Linear       & 1829,95(307) & 12     & 0,0049702 \\
\hline
    7        & 1187,26(820) & 6   &   0,00727521 \\
\hline
    8        & 1147,62(776) & 5    &  0,00975073 \\
\hline
    9        & 453,387(871) & 5     & 0,0080763 \\
\hline
    10       & 701,769(966) & 5     &  0,00768342 \\
\hline
    11       & 1333,45(1061) & 5   &  0,00773327 \\
\hline
\end{tabular}
\caption{Values of the BIC criteria for various number of control points (given in the
first column). The BIC criteria select the model of dimension 5 with 9 control points using as
projection matrix the $5$ axis given by the PCA. The total variance (inertia) of the data set was 26.59832.}
\label{tab:spectrometry}
\end{table}

\begin{figure}[!htb]
\includegraphics[width =11cm]{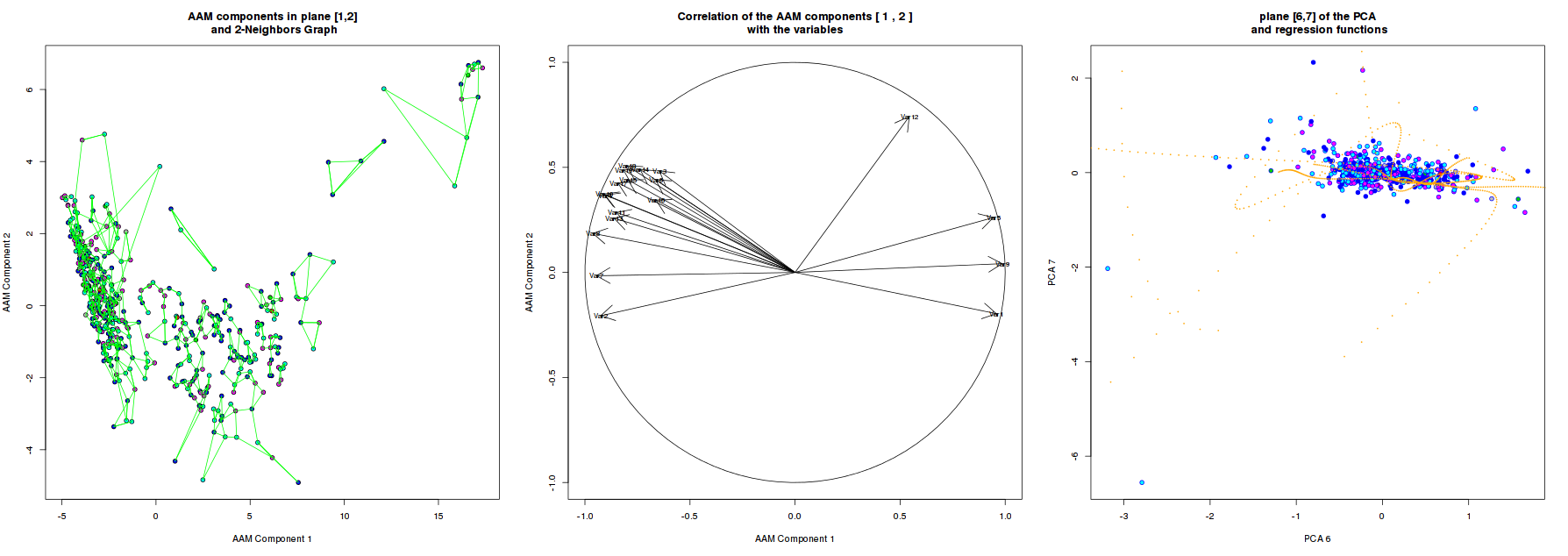}
\includegraphics[width =11cm]{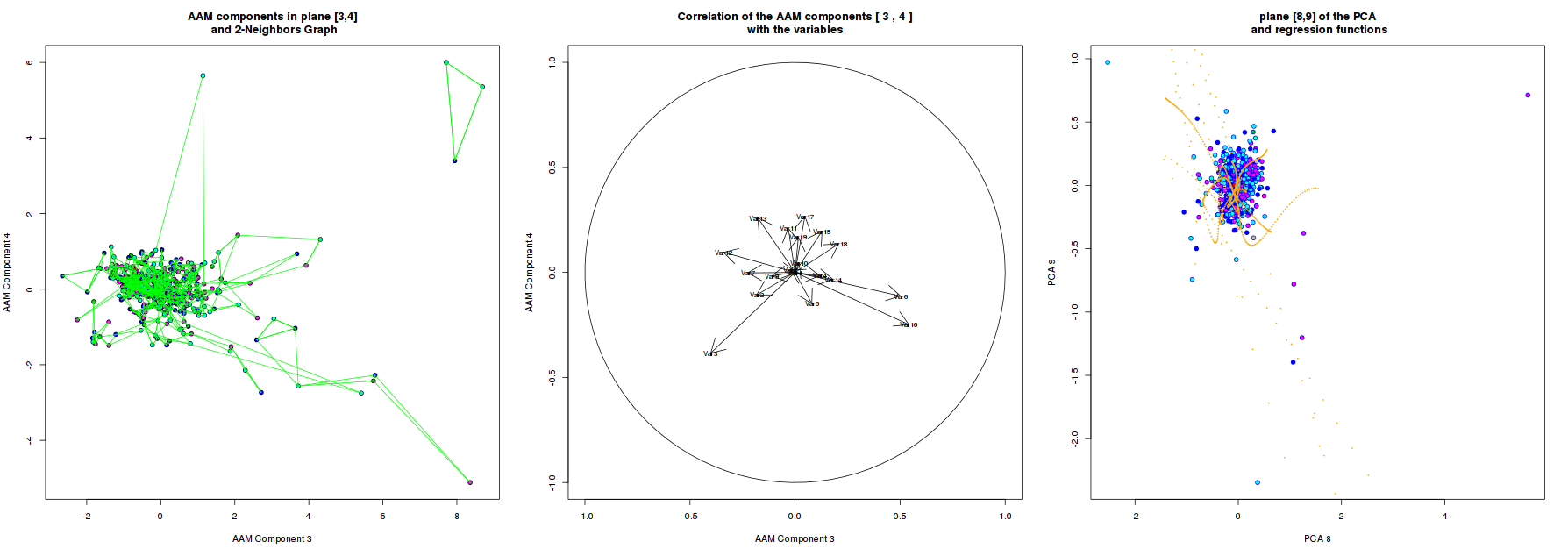}
\caption{Some plans of projection of the PCA with the 2-neighbors graph (left) and with the single
regression functions (right). As we use the PCA for the projection matrix, the AAM components are the
PCA components. The colors of the points represent the classification of the stars.}
\label{fig:spect_summary}
\end{figure}

\begin{figure}[!htb]
\includegraphics[width =11cm]{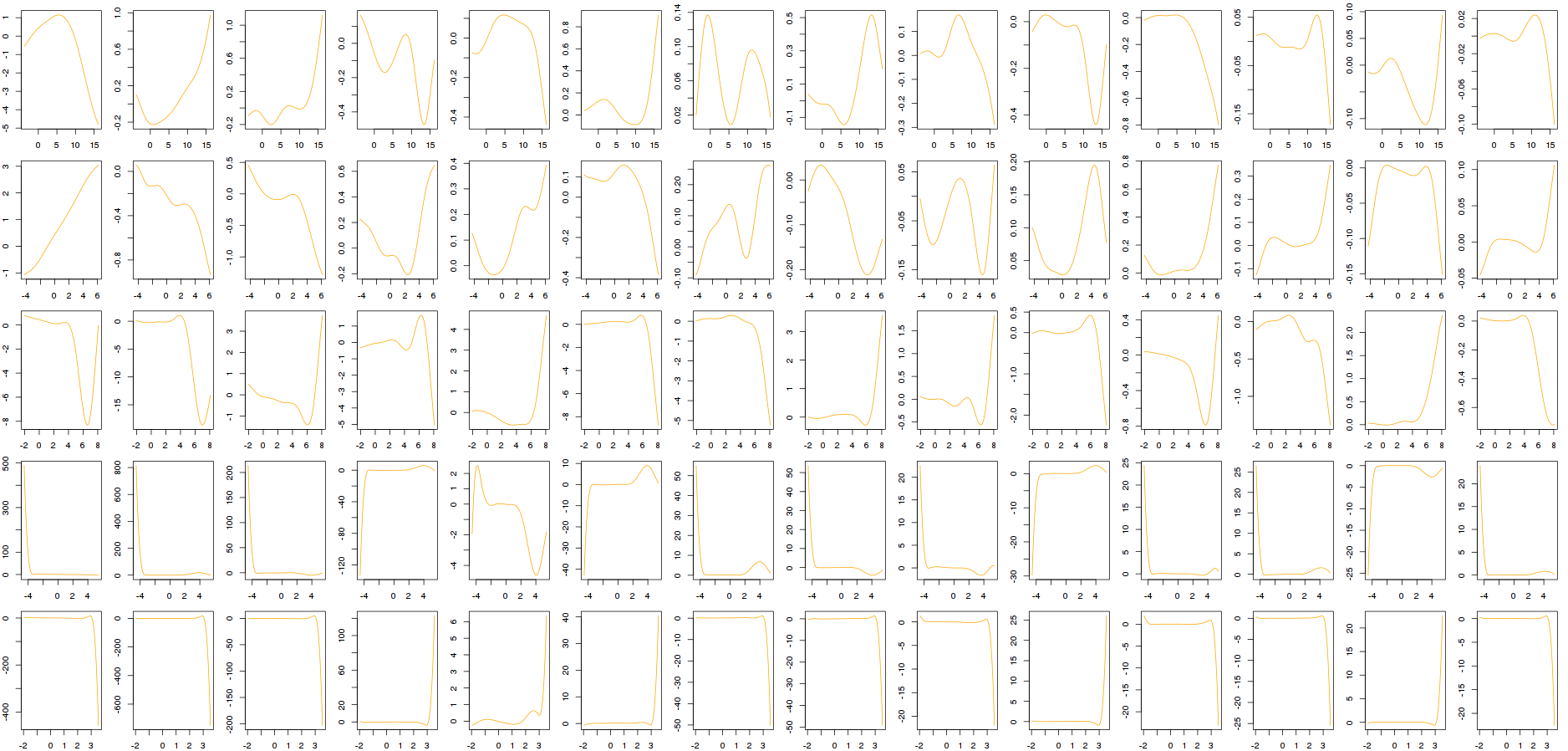}
\caption{The individual regression functions from $\R^5$ to $\R^{14}$. In each row we have the non-zero
predictor sampled in the range $[\min,\max]$, and in each column the evolution of the functions in the
dimension 6,...,19. The system of coordinates is the one given by the PCA. }
\label{fig:spect_plot1}
\end{figure}

\end{center}

\section{Conclusion}
\label{sec::conc}
We have presented a class of auto-associative model for data modeling and visualization
called semi-linear auto-associative models. We provided theoretical groundings for these
models by proving that the principal component analysis and the probabilistic principal component
analysis are special cases. Our model allows to models data set with a simple non-linear component
and is truly generative with an underlying probabilistic interpretation. However it does not
allow to models data with a strong non-linear component and it depends highly on the choice
of the projection matrix.

The Semi-Linear PCA have been implemented in C++ using the \emph{stk++} library \cite{stk++} and
is available at:
\url{https://sourcesup.renater.fr/projects/aam/}.

The program is accompanied with a set of \emph{R} scripts which allows to simulate and display the results
of the \emph{aam} program.


\bibliographystyle{plain}
\bibliography{aam}

\begin{thebibliography}{10}

\bibitem{Akaike74}
H.~Akaike.
\newblock {A new look at the statistical mode identification}.
\newblock {\em IEEE Transaction on Automatic Control}, 19:716--723, 1974.

\bibitem{Baldi}
Pierre Baldi and Kurt Hornik.
\newblock {Neural networks and principal component analysis: Learning from
  examples without local minima}.
\newblock {\em Neural Networks}, 2(1):53--58, 1989.

\bibitem{BES95}
P.~Besse and F.~Ferraty.
\newblock {A fixed effect curvilinear model}.
\newblock {\em Computational Statistics}, 10(4):339--351, 1995.

\bibitem{BISHOP95}
C.~M. Bishop.
\newblock {\em {Pattern recognition and machine learning, information science
  and statistics}}.
\newblock Springer, Berlin, 2006.

\bibitem{Caussinus}
H.~Caussinus and A.~Ruiz-Gazen.
\newblock {Metrics for finding typical structures by means of Principal
  Component Analysis}.
\newblock {\em Data science and its Applications, Harcourt Brace Japan}, pages
  177--192, 1995.

\bibitem{DEL01}
P.~Delicado.
\newblock {Another look at Principal curves and surfaces}.
\newblock {\em Journal of Multivariate Analysis}, 77:84--116, 2001.

\bibitem{DUR93}
J.~F. Durand.
\newblock {Generalized principal component analysis with respect to
  instrumental variables via univariate spline transformations}.
\newblock {\em Computational Statistics and Data Analysis}, 16:423--440, 1993.

\bibitem{Fried87}
J.~H. Friedman.
\newblock {Exploratory Projection Pursuit}.
\newblock {\em Journal of the American Statistical Association},
  82(397):249--266, 1987.

\bibitem{Fried81}
J.~H. Friedman and W.~Stuetzle.
\newblock {Projection Pursuit Regression}.
\newblock {\em Journal of the American Statistical Association},
  76(376):817--823, 1981.

\bibitem{Fried74}
J.~H. Friedman and J.~W. Tukey.
\newblock {A Projection Pursuit algorithm for exploratory data analysis}.
\newblock {\em IEEE Trans. on computers}, 23(9):881--890, 1974.

\bibitem{Garcia2008}
J.~Garcia, N.~Sanchez, and R.~Velasquez.
\newblock {Quantitative Stellar Spectral Classification. IV. Application to the
  Open Cluster IC 2391}.
\newblock {\em Rev.Mex.Astron.Astrofis. 45 (2009) 13-24}, September 2008.

\bibitem{Stock2004}
J.~Garcia, J.~Stock, M.~J. Stock, and N.~Sanchez.
\newblock {Quantitative Stellar Spectral Classification. III. Spectral
  Resolution}.
\newblock {\em Rev.Mex.Astron.Astrofis. 41 (2005) 31-40}, October 2004.

\bibitem{GI}
S.~Girard and S.~Iovleff.
\newblock {Auto-Associative models and generalized principal component
  analysis}.
\newblock {\em Journal of Multivariate Analysis}, 93:21--39, 2005.

\bibitem{Prautzsch2002}
Prautzsch H., Boehm W., and Paluszny M.
\newblock {\em {B{\'e}zier and B-Spline Techniques}}.
\newblock {Mathematics and visualization}. 2002.

\bibitem{Hall}
P.~Hall.
\newblock {On polynomial-based projection indices for exploratory projection
  pursuit}.
\newblock {\em The Annals of Statistics}, 17(2):589--605, 1990.

\bibitem{HAS89}
T.~Hastie and W.~Stuetzle.
\newblock {Principal curves}.
\newblock {\em Journal of the American Statistical Association},
  84(406):502--516, 1989.

\bibitem{Hastie2001}
T.~Hastie, R.~Tibshinari, and J.~Friedman.
\newblock {\em {The elements of statistical learning}}.
\newblock Springer Series in Statistics, Springer, second edition edition,
  2001.

\bibitem{Hastie90}
T.~J. Hastie and R.~J. Tibshirani.
\newblock {Generalized Additive Models}.
\newblock {\em Monographs on Statisics and Applied Probability}, 43, 1990.

\bibitem{Hinton97}
G.~E. Hinton, P.~Dayan, and M.~Revow.
\newblock {Modeling the manifolds of images of handwrittent digits}.
\newblock {\em IEEE transactions on Neural networks}, 8(1):65--74, 1997.

\bibitem{Hotel}
H.~Hotelling.
\newblock {Analysis of a complex of statistical variables into principal
  components}.
\newblock {\em Journal of Educational Psychology}, 24:417--441, 1933.

\bibitem{Huber}
P.~J. Huber.
\newblock {Projection Pursuit}.
\newblock {\em The Annals of Statistics}, 13(2):435--475, 1985.

\bibitem{stk++}
Serge Iovleff.
\newblock {The Statitiscal ToolKit}.
\newblock \url{http://www.stkpp.org/}, 2012.

\bibitem{Jolli}
I.~Jolliffe.
\newblock {\em {Principal Component Analysis}}.
\newblock Springer-Verlag, New York, 1986.

\bibitem{Jones}
M.~C. Jones and R.~Sibson.
\newblock {What is projection pursuit?}
\newblock {\em Journal of the Royal Statistical Society, Ser. A}, 150:1--36,
  1987.

\bibitem{Klinke}
S.~Klinke and J.~Grassmann.
\newblock {Projection pursuit regression}.
\newblock {\em Wiley Series in Probability and Statistics}, pages 471--496,
  2000.

\bibitem{Lebart00}
Lebart L.
\newblock {Contiguity analysis and classification}.
\newblock In Gaul W., Opitz O., and Schader M., editors, {\em {Data Analysis}},
  pages 233--244. Springer-Verlag, 2000.

\bibitem{Bei}
Bei-Wei Lu and Lionel Pandolfo1.
\newblock {Quasi-objective nonlinear principal component analysis}.
\newblock {\em Neural Networks}, 24(2):159--170, 2010.

\bibitem{Pan}
J-X. Pan, W-K. Fung, and K-T. Fang.
\newblock {Multiple outlier detection in multivariate data using projection
  pursuit techniques}.
\newblock {\em Journal of Statistical Planning and Inference}, 83(1):153--167,
  2000.

\bibitem{Pearson}
K.~Pearson.
\newblock {On lines and planes of closest fit to systems of points in space}.
\newblock {\em The London, Edinburgh and Dublin philosophical magazine and
  journal of science}, Sixth Series(2):559--572, 1901.

\bibitem{Roweis2000}
Sam~T. Roweis and Lawrence~K. Saul.
\newblock {Nonlinear Dimensionality Reduction by Locally Linear Embedding}.
\newblock {\em Science}, 290(5500):2323--2326, 2000.

\bibitem{Schoelkopf99}
B.~Sch{\"o}lkopf, A.~Smola, and K.-R. M{\"u}ller.
\newblock {Kernel Principal Component Analysis}.
\newblock In {\em {Advances in Kernel Methods---Support Vector Learning}},
  pages 327--352, 1999.

\bibitem{Scholz2007}
M.~Scholz, M.~Fraunholz, and J.~Selbig.
\newblock {Nonlinear Principal Component Analysis: Neural Network Models and
  Applications}.
\newblock In {\em {Principal Manifolds for Data Visualization and Dimension
  Reduction, volume 28,}}, pages 205--222. Springer-Verlag, 2007.

\bibitem{Schwarz1978}
G.~Schwarz.
\newblock {Estimating the dimension of a model}.
\newblock {\em Annals of Statistics}, 6:461--464, 1978.

\bibitem{Stock1999}
J.~Stock and M.~Stock.
\newblock {Quantitative stellar spectral classification}.
\newblock {\em Revista Mexican de Astronomia y Astrofisica}, 34:143--156, 1999.

\bibitem{Stock2008}
M.~J. Stock, J.~Stock, J.~Garcia, and N.~Sanchez.
\newblock {Quantitative Stellar Spectral Classification. II. Early Type Stars}.
\newblock {\em Rev.Mex.Astron.Astrofis. 38 (2002) 127-140}, May 2002.

\bibitem{Tenenbaum2000}
J.~B. Tenenbaum.
\newblock {A Global Geometric Framework for Nonlinear Dimensionality
  Reduction}.
\newblock {\em Science}, 290(5500):2319--2323, December 2000.

\bibitem{TB99}
M.~E. Tipping and C.~M. Bishop.
\newblock {Probabilistic principal component analysis}.
\newblock {\em Journal of the Royal Statistical Society, Ser. B},
  61(3):611--622, 1999.

\end{thebibliography}

\end{document}